\newtheorem{thm}{Theorem}
\newtheorem{lmm}{Lemma}
\lstdefinelanguage{scala}{
  morekeywords={abstract,case,catch,class,def,%
    do,else,extends,false,final,finally,%
    for,if,implicit,import,match,mixin,%
    new,null,object,override,package,%
    private,protected,requires,return,sealed,%
    super,this,throw,trait,true,try,%
    type,val,var,while,with,yield},
  otherkeywords={=>,<-,<\%,<:,>:,\#,@},
  sensitive=true,
  morecomment=[l]{//},
  morecomment=[n]{/*}{*/},
  morestring=[b]",
  morestring=[b]',
  morestring=[b]"""
}
\newcommand{\keywords}[1]{\par\addvspace\baselineskip
\noindent\keywordname\enspace\ignorespaces#1}
\begin{document}

\mainmatter  % start of an individual contribution

% first the title is needed
\title{Progressive Temporal Window Widening}

% a short form should be given in case it is too long for the running head
\titlerunning{Progressive Temporal Window Widening}

% the name(s) of the author(s) follow(s) next
%
% NB: Chinese authors should write their first names(s) in front of
% their surnames. This ensures that the names appear correctly in
% the running heads and the author index.
%
\author{David Tolpin}

%\authorrunning{ }

% the affiliations are given next; don't give your e-mail address
% unless you accept that it will be published
\institute{PayPal, \email{dtolpin@paypal.com}}

%
% NB: a more complex sample for affiliations and the mapping to the
% corresponding authors can be found in the file "llncs.dem"
% (search for the string "\mainmatter" where a contribution starts).
% "llncs.dem" accompanies the document class "llncs.cls".
%

\toctitle{Lecture Notes in Computer Science}
\tocauthor{Authors' Instructions}
\maketitle

\begin{abstract}
	This paper introduces a scheme for data stream processing
	which is robust to batch duration. Streaming frameworks
	process streams in batches retrieved at fixed time
	intervals. In a common setting a pattern recognition
	algorithm is applied independently to each batch. Choosing
	the right time interval is tough --- a pattern may not fit
	in an interval which is too short, but detection will be
	delayed and memory may be exhausted if the interval is too
	long.  We propose here Progressive Window Widening, an
	algorithm for increasing the interval gradually so that
	patterns are caught at any pace without unnecessary delays
	or memory overflow.

	This algorithm is relevant to computer security, system
	monitoring, user behavior tracking, and other applications
	where patterns of unknown or varying duration must be
	recognized online in data streams. Modern data stream
	processing frameworks are ubiquitously used to process high
	volumes of data, and adaptive memory and CPU allocation,
	facilitated by Progressive Window Widening, is crucial for
	their performance.
\end{abstract}

\keywords{temporal data streams, sliding windows, stream processing}

\section{Introduction}

We consider here the problem of windowed data stream
processing~\cite{GO03}.  A data stream is a real-time,
continuous, ordered sequence of items.  In the windowed setting,
the arriving data are divided into windows, either by time
interval or by data size, and a pattern recognition algorithm,
based on a data mining or machine learning approach, is applied
to each window to discover exact or approximate patterns
appearing in the window~\cite{G12}. Here, we view a pattern
recognition algorithm as a black box function on stream
fragments. For example, a pattern can be an episode --- a
partially ordered sparse subsequence~\cite{MTI97}, the language
of the text, or the most likely goal of the sequence of actions
in the fragment.

Windowed data stream processing is frequently used in computer
security~\cite{WFP99,VJ07,YDK14}, user behavior
tracking~\cite{ABA+14}, sensor data analysis for system
monitoring~\cite{AFN+07}, and other applications. The right
choice of window size is crucial for efficient data processing
and timely response. Data are divided either into physical
windows, by time interval, or into logical, or count-based,
windows, by data size or number of records in a single
window~\cite{GO03,G12}. 

The choice of either physical or logical
windows depends both on properties of the data stream and on the
objective of the data processing algorithm. Logical
windows are more naturally handled by machine learning
algorithms with inputs of fixed size~\cite{G12}, while
physical windows allow both more efficient
processing and faster online response~\cite{HYZ+10,ZDL+12,ZDL+13}.
This paper explores selecting a window size for
physical, interval-based windows. The dilemma behind selecting
a window size which inspired this research is 
\begin{itemize}
    \item  whether to choose a
smaller window and sacrifice context, such that no single window
contains a complete pattern,
\item or to increase the window size at the cost of increased
    consumption of computational resources and delayed response.
\end{itemize}
\begin{figure}
    \centering
	\includegraphics[scale=0.5]{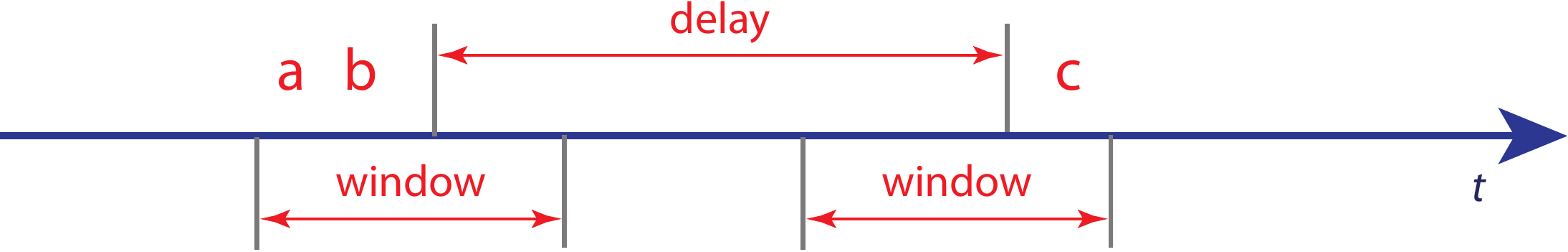}
    \caption{Adversary escaping detection. Pattern $a, b, c$
    cannot be caught because the delay between $b$
    and $c$ is longer than the window duration.}
    \label{fig:intruder}
\end{figure}
This dilemma is relevant to many applications of data stream
processing, but in particular to security
applications~\cite{WFP99,VJ07,YDK14}, where an adversary
aware of the maximum window time interval can escape the
detection algorithm by introducing delays between data stream
entries (such as transactions or web site accesses) which exceed
the interval and prevent detection (Figure~\ref{fig:intruder}).
Even if the maximum duration of a pattern is known in advance,
setting the window size to exceed the maximum duration means
that recognition of any shorter pattern will be delayed.

To address this dilemma, we introduce an algorithm which we call
\textit{Progressive Window Widening} (PWW). PWW processes the
data stream through an array of sliding windows of increasing
physical size, such that shorter patterns are recognized sooner,
however windows covering longer patterns are also applied to the
stream.  Despite employing several window sizes in parallel, PWW
still remains efficient in CPU and memory consumption. The paper
proceeds as follows: first, necessary preliminaries are
introduced in Section~\ref{sec:preliminaries}. Then, the
algorithm is described and analysed (Sections~\ref{sec:pww}
and~\ref{sec:analysis}), as well as evaluated empirically
(Section~\ref{sec:case-studies}).  Finally, related work is
reviewed, and contribution and future research are discussed
(Sections~\ref{sec:related} and~\ref{sec:contribution}).

\section{Preliminaries}
\label{sec:preliminaries}

\subsection{Batched Stream Processing}

In \textit{batched stream processing}, which we adopt in this
paper as a lower level for PWW, stream data arrives in batches
--- sequences of fixed duration. Several batches can be combined
into a window of size equal to the total size of the batches
composing the window. Along with batch size (or duration, used
interchangeably here), a batch is characterized by its
\textit{length}, the number of atomic elements, or records, it
contains. For example, a one-minute batch of web site log stream
may contain 1000 entries --- we shall say that the size, or
duration of the batch is 1 minute, and the length of the batch
is 1000 entries.

Further on, we extend the note of batched stream processing by
stating that a data stream with batch duration $t$ may be
transformed into a data stream with batch duration $kt$ by
concatenating each $k$ consecutive batches together. Denoting
a batch of the original stream with batch duration $t$ by
$B_{i,l}$
and a batch of the combined stream with batch duration $kt$ by
$B_{i+1,j}$ for some $i$, $j$, and $l$, one may write ($\circ$ stands for batch
concatenation):

\begin{equation}
    B_{i+1,j} = B_{i,kj-k+1}\circ B_{i,kj-k+2}\circ \dots \circ B_{i,kj}\quad \forall j \in \mathbb{N}^+
    \label{eq:combined-batch}
\end{equation}

\subsection{Sliding Windows}

Depending on the overlay between windows, one discerns between
\textit{tumbling} (there are gaps between windows),
\textit{jumping} (the windows are adjacent), and
\textit{sliding} (overlapping) windows~\cite{GO03}. PWW is based on sliding
windows; the next window starts earlier than the current window
terminates. 

Sliding windows have several uses. We are interested in one
particular case: sliding windows with a half-size overlap; the
feature we are interested in is described by Lemma~\ref{lmm:sliding}:
\begin{figure}
    \centering
	\includegraphics[scale=0.55]{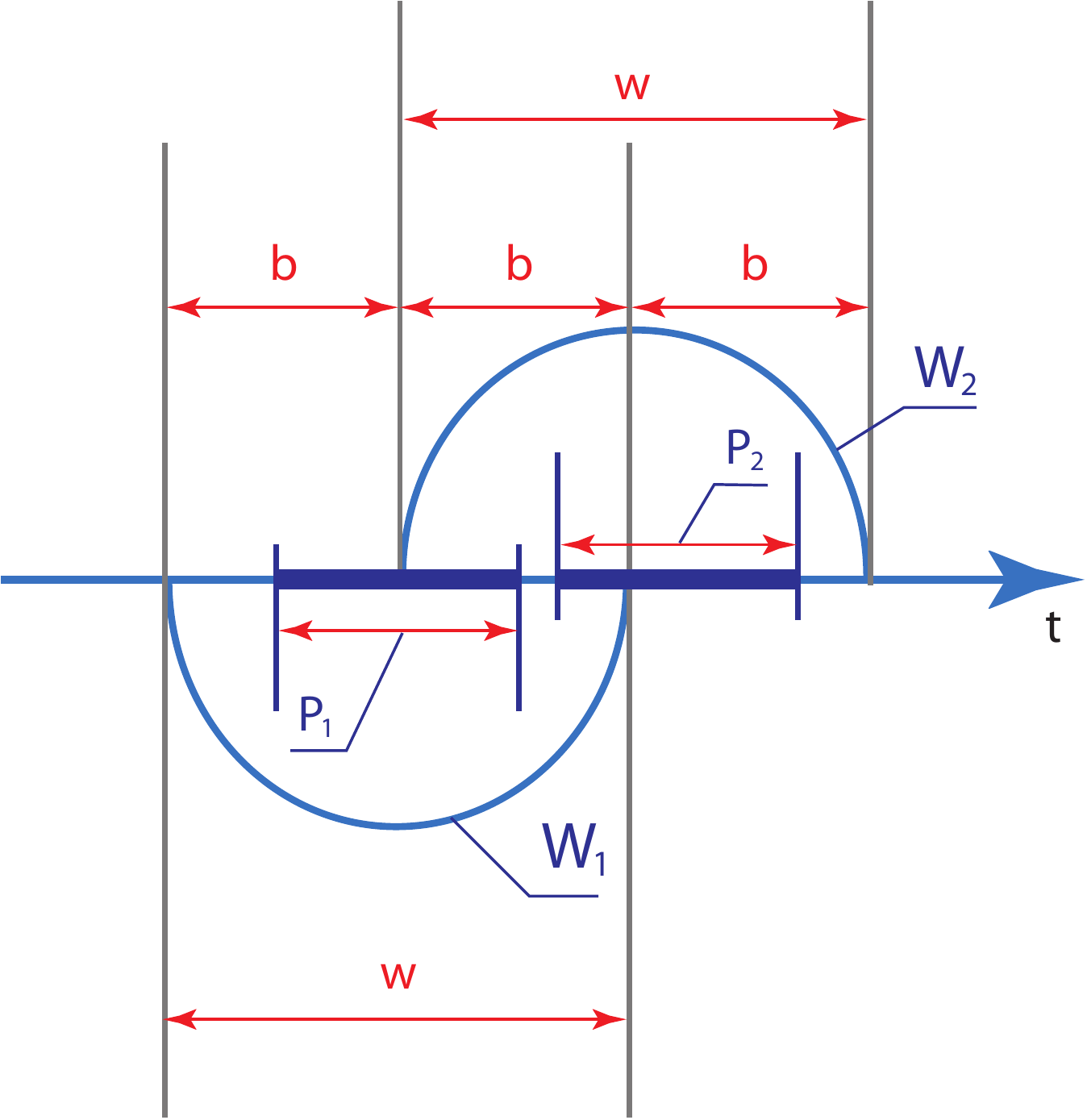}
	\caption{Pattern recognition in sliding windows. If we want to
recognize patterns of duration at most $t$, it is sufficient to use
sliding windows of size $2t$ with half-size overlap.}
    \label{fig:sliding-window}
\end{figure}
\begin{lmm}
    \label{lmm:sliding}
A sequence of sliding windows of size $2b$ with overlap $b$
covers any interval of size at most $b$.
\end{lmm}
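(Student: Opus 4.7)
The plan is to prove containment directly by constructing, for any interval $I$ of length at most $b$, an explicit sliding window of size $2b$ that contains it. I would begin by setting up notation: index the windows by an integer $k$, so that window $W_k$ occupies $[kb,(k+2)b]$. Because consecutive windows share a block of length $b$, every real number lies in at least one window, and most points lie in exactly two; this redundancy is precisely what will let every short interval fit inside some single window.

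Given an arbitrary interval $I=[s,s+\ell]$ with $\ell\le b$, I would choose the window index $k=\lfloor s/b\rfloor$, so that $kb\le s<(k+1)b$. Then I would verify the two containment inequalities: the left endpoint satisfies $s\ge kb$ by the choice of $k$, and the right endpoint satisfies $s+\ell<(k+1)b+b=(k+2)b$ because $s<(k+1)b$ and $\ell\le b$. Hence $I\subseteq W_k$, establishing the claim. A diagram matching Figure~\ref{fig:sliding-window} would make the argument visually obvious: the leftmost edge of $I$ falls inside the first half $[kb,(k+1)b]$ of $W_k$, leaving a full block of length $b$ to the right, which is enough room for the rest of $I$.

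There is essentially no technical obstacle here — the statement is a simple covering fact about arithmetic progressions on the real line. The only thing to be mildly careful about is the open/closed convention at boundaries: if $s=(k+1)b$ exactly, the floor assigns $I$ to window $W_{k+1}$ rather than $W_k$, which still contains it. I would mention this briefly so that the reader sees the tightness of the bound: if $\ell$ were allowed to exceed $b$, the argument would fail, since $s+\ell$ could surpass $(k+2)b$ for every choice of $k$.
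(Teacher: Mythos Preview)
Your proof is correct and is essentially the same argument as the paper's: the paper phrases it as ``divide into batches of size $b$; any interval of length at most $b$ lies in one batch or spans two consecutive batches, and each such pair is a window,'' which is exactly your choice $k=\lfloor s/b\rfloor$ and the verification $I\subseteq[kb,(k+2)b]$ written in words rather than inequalities. The only difference is presentation---you make the index and the containment explicit, while the paper appeals to the picture---but the idea is identical.
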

\begin{proof}
    Indeed, divide the stream into batches of size $b$
    (Figure~\ref{fig:sliding-window}). Any
    interval of size at most  $b$ is either entirely within a
    single batch, or spans two consequent batches. But every
    single batch, and every pair of consequent batches is
    covered by a single window. This completes the proof.
\end{proof}
A corollary of Lemma~\ref{lmm:sliding} is that if we want to
recognize patterns of duration at most $t$, it is sufficient to use
sliding windows of size $2t$ with half-size overlap.

\section{Progressive Window Widening}
\label{sec:pww}

We introduce here \textit{Progressive Window Widening,} an
algorithm for progressive widening of temporal windows. To
define the algorithm efficiently, we rely on two auxiliary
notions:
\begin{itemize}
	\item $L_{max}$ --- the maximum length of a data sequence which 
		may contain a pattern. For example, if a game player must
		complete each game round in 20 moves, than any pattern
		pertaining to a single round must be contained within 20
		moves. Alternatively, $L_{max}$ can be chosen such that
		the probability of a random occurence of the pattern
		in a data sequence of length $L_{max}$ is sufficiently
		low \cite{GAS03}.
    \item $T_{max}$ --- the upper bound on pattern duration.
        For example, if a computer is rebooted every week, then
        the longest duration of a running process is one week,
        or $604\,800$ (less than $2^{20}$) seconds. $T_{max}$ 
        is not strictly required for the definition of the
        algorithm but helps in the algorithm's implementation.
\end{itemize}
The algorithm processes the data stream in parallel, through
multiple asynchronous sliding windows of different sizes.

\subsection{Algorithm Outline}

\begin{algorithm}
    \begin{algorithmic}[1]
    \STATE \textbf{procedure} \textsc{PWW}($S$ -{}- stream, $t$ -{}- batch duration)
    \STATE \textsc{Sleep}($t$) \label{alg:pww-sleep}
    \STATE Create stream $S'$ from $S$ with batch duration
    $2t$  (see Algorithm~\ref{alg:dropping}) \label{alg:pww-double}
    \STATE Call \textsc{PWW}($S'$, $2t$) \textit{asynchronously} \label{alg:pww-recursive}
    \FOR {\textbf{each} sliding window $W$ \textbf{in} $S$} \label{alg:pww-sliding}
        \IF {patterns present in $W$} \label{alg:pww-detect}
            \STATE Output detected patterns 
        \ENDIF
    \ENDFOR
\end{algorithmic}
\caption{Progressive Window Widening}
\label{alg:pww}
\end{algorithm}

PWW (Algorithm~\ref{alg:pww}) performs the following operations:
\begin{enumerate}
	\item Recursively combines pairs of adjacent batches, doubling batch
        duration of each stream and creating a stream with
        batches of double duration (line~\ref{alg:pww-double}).
	\item Runs a detection algorithm in a sliding window on each
        stream (line~\ref{alg:pww-detect}).
    \item While combining batches, discards subintervals of
        combined batches which cannot intersect a yet unseen pattern 
        (see Section~\ref{sec:combining-batches} for detailed
        explanation).
\end{enumerate}
As the algorithm runs, multiple batched streams are created, and
sliding windows move through each of the
streams~(Figure~\ref{fig:window-widening}). The algorithm relies
on asynchronous recursive calls to \textsc{PWW}
(line~\ref{alg:pww-recursive}). Asynchronous
calls are possible because the processing of each stream is
independent. Such asynchronous execution is particularly
suitable for modern multi-core multi-node cluster architectures:
different invocations of \textsc{PWW} may be executed on
different cores or different nodes in the cluster. 

Note that extra streams are created (lines~\ref{alg:pww-sleep}--\ref{alg:pww-recursive}) and processed
(line~\ref{alg:pww-sliding}) with exponentially increasing delays,
since a window can be processed only upon termination of the window's interval.
\begin{figure}
	\centering
	\includegraphics[scale=0.55]{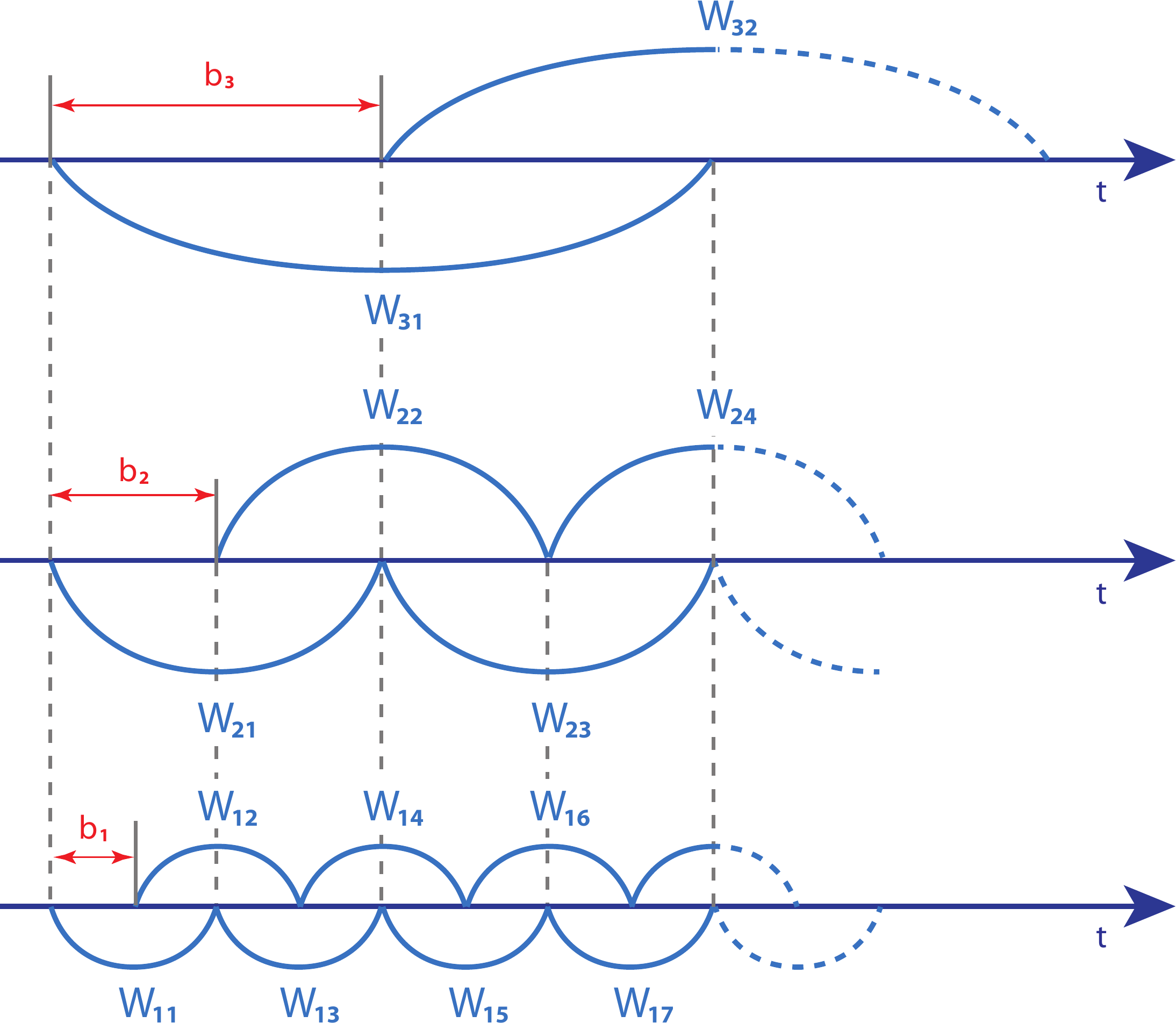}
	\caption{Window widening. Multiple sliding windows of increasing
    durations are processed in parallel, with delay.}
    \label{fig:window-widening}
\end{figure}

\subsection{Combining batches}
\label{sec:combining-batches}

An integral part of PWW is the optional discarding of a subinterval while
combining two subsequent batches.  For every stream of batches of
duration $t$, the algorithm waits $2t$ time units for 2 batches to
arrive.  Then, a stream of base duration $2t$ is formed by combining
the batches (Algorithm~\ref{alg:dropping}).
\begin{algorithm}
    \begin{algorithmic}[1]
    \STATE \textbf{procedure} \textsc{Combine}($B_{i-1,2j-1}$,
        $B_{i-1,2j}$ -{}- consequent batches)
    \STATE $B_{i, j}$ $\gets$ \textsc{Concatenate}($B_{i-1,2j-1}$, $B_{i-1,2j}$) \label{alg:dropping-concatenate}
    \IF {\textsc{Length}($B_{i, j}$) $> 2L_{max}$} \label{alg:dropping-start}
       \STATE \textsc{Remove}($B_{i, j}$, from=$L_{max}$, till=\textsc{Length}($B_{i, j}$)$- L_{max}$)
    \ENDIF \label{alg:dropping-end}
	\RETURN $B_{i, j}$
\end{algorithmic}
\caption{Combining Batches}
\label{alg:dropping}
\end{algorithm}
PWW combines batches by concatenation
(line~\ref{alg:dropping-concatenate}). If the length of the combined
batch is greater than $2L_{max}$, the middle part of the combined
batch is discarded (Figure~\ref{fig:dropping}), leaving subsequences
of length $L_{max}$ at both ends of the batch
(lines~\ref{alg:dropping-start}--\ref{alg:dropping-end}).
\begin{figure}
    \centering
	\includegraphics[scale=0.55]{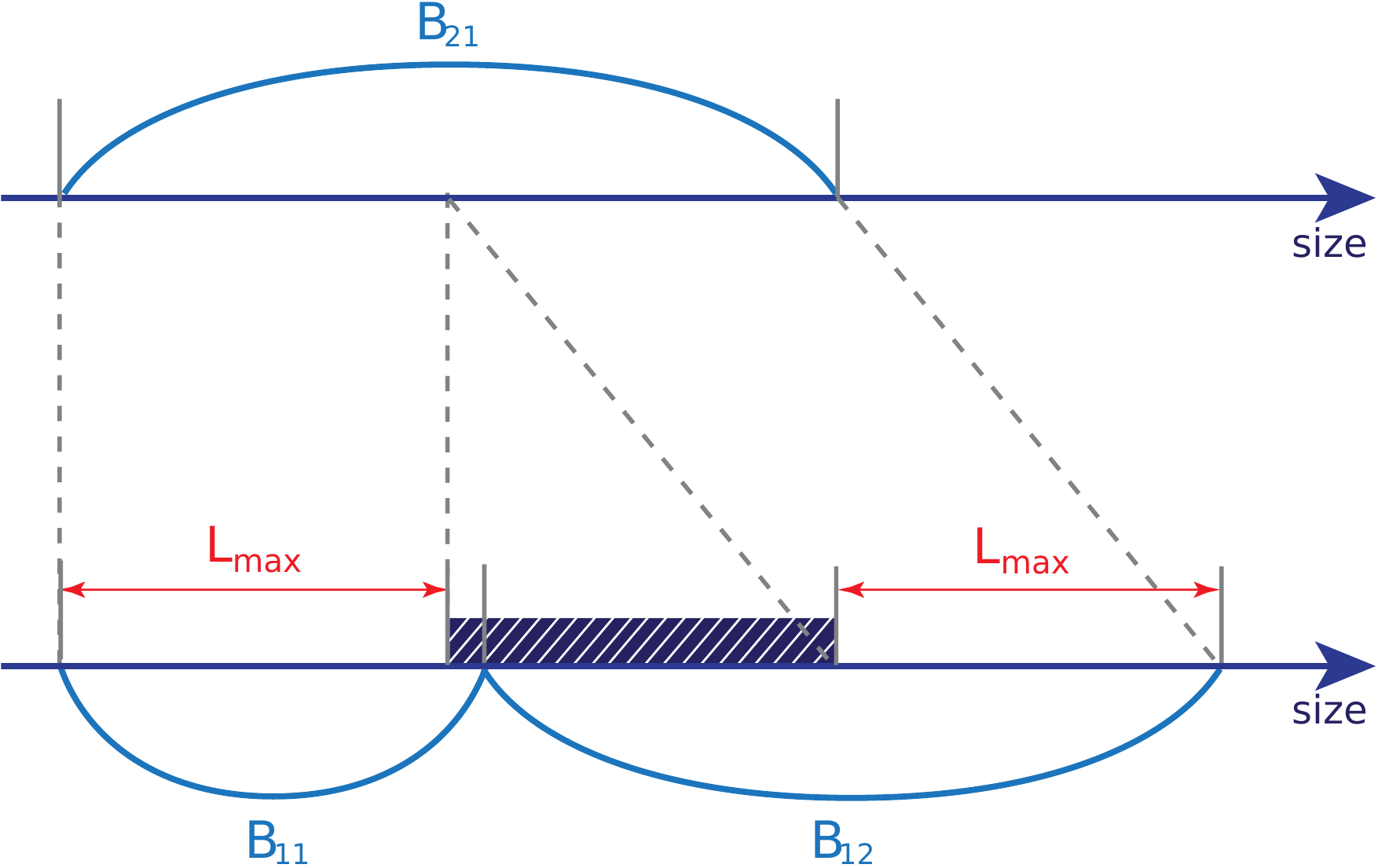}
	\caption{Removing useless data from batches. Since no
    pattern can span a fragment of more than $L_{max}$ items,
    the middle part of the combined batch can be discarded.
    Only subsequences of length $L_{max}$ at the ends of the
    batch must be kept.}
    \label{fig:dropping}
\end{figure}
Consequently, no batch in any stream is longer than $2L_{max}$. The
subintervals may be discarded because a combined batch at the next level
coincides with a sliding window at the current level, so new patterns may be
discovered only between batches, rather than within a single batch.

\section{Algorithm Analysis}
\label{sec:analysis}

In this section we show that the algorithm eventually has a chance to
detect a pattern of any duration and, at the same time, runs in
bounded resources.

\subsection{Correctness}

Since window duration is unbounded, to prove the correctness we just
need to show that discarded intervals do not intersect any
pattern which did not fall entirely within a single window.

\begin{thm}Any pattern of length at most $L_{max}$ is contained in 
    a window.
\end{thm}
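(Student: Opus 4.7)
The plan is to locate $P$ inside a specific sliding window by choosing the coarsest level just large enough to cover it. Let $d$ denote the duration of $P$ and let $i^\ast$ be the smallest level such that the time interval of $P$ fits inside some sliding window $W = B_{i^\ast,j} \circ B_{i^\ast,j+1}$ at level $i^\ast$; Lemma~\ref{lmm:sliding} guarantees that such an $i^\ast$ exists, since any duration is eventually bounded by $2^{i^\ast} t$.

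The central structural observation is that, at this minimal level, $P$ must straddle the boundary between the two constituent batches (the edge case $i^\ast = 0$ is handled directly, since there is no discarding at the ground level). Indeed, if $P$'s time interval sat entirely within $B_{i^\ast,j}$ (symmetrically, $B_{i^\ast,j+1}$), it would also lie inside the sliding window $B_{i^\ast - 1,\,2j-1} \circ B_{i^\ast - 1,\,2j}$ at level $i^\ast - 1$, whose time interval coincides with that of $B_{i^\ast,j}$; this would contradict the minimality of $i^\ast$. Hence $P$ straddles the boundary, and since $|P| \leq L_{max}$, at most $L_{max}$ records of $P$ lie inside $B_{i^\ast,j}$'s time interval, all of them clustered at its right end, and symmetrically for $B_{i^\ast,j+1}$.

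To close the argument I would establish an auxiliary preservation lemma by induction on the level: every batch $B_{i,j}$ still contains the first and last $L_{max}$ records (as they appear in the original stream) of its time interval. The base case is trivial because the ground-level stream undergoes no discarding. In the inductive step one case-splits on whether the leading sub-batch already contains at least $L_{max}$ records, and uses the fact that \textsc{Combine} retains exactly the first and last $L_{max}$ entries of the concatenation. Together with the straddling observation, this lemma shows that the records of $P$ falling inside each constituent batch's time interval lie within the preserved suffix or prefix of length $L_{max}$, so $W$ contains every record of $P$.

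The main obstacle, I expect, is stating the preservation lemma carefully enough that the induction passes through without ambiguity: one has to describe what ``the first $L_{max}$ records'' means relative to sub-batches that may have already been pruned, and handle the bookkeeping when a short sub-batch is absorbed into the prefix of the concatenation at the next level. Once that lemma is in place, the straddling argument plugs in immediately and yields the theorem.
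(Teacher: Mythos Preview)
Your proposal is correct and follows essentially the same approach as the paper: both arguments hinge on the fact that a combined batch at level $i$ coincides with a sliding window at level $i-1$, so any yet-unseen pattern must straddle a batch boundary, and then the $L_{max}$ records retained at each end guarantee the pattern survives the discarding. Your version is more rigorous --- you make the minimality of $i^\ast$ explicit and isolate the ``preservation lemma'' as an inductive claim --- whereas the paper's proof simply asserts the end-preservation property without spelling out the induction.
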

\begin{proof}Indeed, as we noted earlier, a combined batch at the next
    level coincides with a single sliding window at the current level.
    Any pattern which is contained in a sliding window could be
    seen by the pattern recognition algorithm, and the interval
    containing the pattern can be discarded. Hence, a yet unseen
    pattern intersecting a window must cross one of the ends of the
    window (and of the combined batch at the next level). 

    Since every combined batch with a discarded subinterval has
    $L_{max}$ contiguous elements adjacent to each of the ends, 
    the discarded interval does not intersect with a pattern of length
    at most $L_{max}$. This completes the proof.
\end{proof}

\subsection{Complexity}

We launch an unbounded number of parallel processes, and want to show
that PWW runs in computationally bounded resources. The work that
the algorithm performs is assumed to take place inside a pattern
recognition algorithm run on each sliding window. Let us denote
the resources (a combination of memory and amount of work)
required to run a certain pattern recognition algorithm on
window of length $l$ by $\mathcal{R}(l)$. Then, the following
theorem holds:
\begin{thm}
    \label{thm:complexity}
    Denote by $t$ the batch duration of the initial, uncombined
    stream. Assume that the maximum length of a batch of the initial
    stream does not exceed $2L_{max}$. Then the average resources
    $\rho$ per unit time required to run PWW are bounded by a constant:
    \begin{equation}
        \rho \le \frac {2\mathcal{R}(4L_{max})} t\,.
		\label{eqn:rho}
    \end{equation}
\end{thm}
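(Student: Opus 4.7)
My plan is to index the streams generated by the recursion by their \emph{level} $i$, where level $0$ is the original stream (batch duration $t$) and each recursive call increases the level by one, doubling batch duration to $2^{i}t$ at level $i$. The argument then decomposes into three steps: bound the length of any batch, bound the resources per window on every level, and sum a geometric series across levels.

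\textbf{Step 1: batches stay short.} I would prove by induction on $i$ that every batch at level $i$ has length at most $2L_{max}$. The base case $i=0$ is the hypothesis of the theorem. For the inductive step, concatenating two batches each of length at most $2L_{max}$ yields a batch of length at most $4L_{max}$; the discard step in Algorithm~\ref{alg:dropping} then removes the middle so that the retained batch has length at most $2L_{max}$. Consequently each sliding window at level $i$, being the concatenation of two adjacent batches at that level (per Lemma~\ref{lmm:sliding}), has length at most $4L_{max}$.

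\textbf{Step 2: per-level resource rate.} At level $i$, sliding windows with half-size overlap advance by exactly one batch duration $2^{i}t$ at a time, so a new pattern recognition call is triggered every $2^{i}t$ time units. Assuming the natural monotonicity of $\mathcal{R}$ in window length, each such call uses at most $\mathcal{R}(4L_{max})$ resources by Step~1. Hence the average resource rate attributable to level $i$ is at most $\mathcal{R}(4L_{max})/(2^{i}t)$.

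\textbf{Step 3: sum over levels.} Summing the per-level contributions across the (unbounded) recursion gives
\begin{equation*}
\rho \le \sum_{i=0}^{\infty}\frac{\mathcal{R}(4L_{max})}{2^{i}t} = \frac{\mathcal{R}(4L_{max})}{t}\sum_{i=0}^{\infty} 2^{-i} = \frac{2\mathcal{R}(4L_{max})}{t},
\end{equation*}
which is exactly the bound in \eqref{eqn:rho}. The main obstacle I anticipate is pinning down Step~1 cleanly: one has to combine the inductive invariant with the discard rule of Algorithm~\ref{alg:dropping} (which fires only when the concatenation strictly exceeds $2L_{max}$) and verify that retaining a prefix and a suffix of length $L_{max}$ at the two ends always leaves the batch bounded by $2L_{max}$. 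Once that invariant is established, the remaining steps collapse to the geometric-series calculation above.
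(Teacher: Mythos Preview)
Your proposal is correct and follows essentially the same approach as the paper: bound each window's length by $4L_{max}$ via the discard rule, then sum the geometric series of per-level resource rates. Your Step~1 is in fact more explicit than the paper's (which simply asserts the $4L_{max}$ bound by reference to Algorithm~\ref{alg:dropping}), and your indexing starts levels at $0$ rather than $1$, but the argument is the same.
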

\begin{proof}
    Due to Algorithm~\ref{alg:dropping}, the length of a sliding window
	is at most $4L_{max}$, hence running the pattern recognition
    algorithm on a window requires at most $\mathcal{R}(4L_{max})$
    resources.

    Windows in streams are processed sequentially, and a window in the
	$i$th stream arrives after delay $2^{i-1}t$. Therefore, 
    \begin{equation}
        \rho \le \sum_{i=1}^{\infty} \frac {\mathcal{R}(4L_{max})} {2^it}
			   = \frac {\mathcal{R}(4L_{max})} t \sum_{i=1}^\infty \frac 1 {2^{i-1}}
               = \frac {2\mathcal{R}(4L_{max})} t\,.
    \end{equation}
    This completes the proof.
\end{proof}
Note that the assumption in Theorem~\ref{thm:complexity} is 
satisfied by choosing the initial batch duration $t$ to be small
enough. On the other hand, it may be the case that the length
of a batch at any intermediate level reaches $2L_{max}$ (and
then the data in the batch is partially disregarged, as detailed
in Section~\ref{sec:combining-batches}.

In practice, the number of parallel streams may be bounded.
However, even if unbounded, average resources required to run
the algorithm are bounded.

\section{Case Study: Detecting Remote Shells in a System Call Stream}
\label{sec:case-studies}

In this case study, we monitor an online stream of system calls 
from a network-connected server, and want to detect possible
invocations of remote shells as soon as possible.
System call traces are represented according to the following format:
\begin{verbatim}
system-call [argument ...] [=> return-value]
\end{verbatim}
A line consists of the system call name, followed by optional
arguments, followed by optional return value preceded by
\texttt{=>}.  Each argument is a name-value pair, with the name
separated from the value by \texttt{=}.  System call sequences
corresponding to remote shell invocations can be interspersed
with unrelated activities.

For simplicity, we limit detection to a single episode which may
correspond to accepting a network connection and then launching
a shell communicating with the remote user through the connection:
\begin{verbatim}
1 accept fd=x => y
2 dup fd=y => 0 | dup fd=y => 1 | dup fd=y => 2
3 execve exe=z
\end{verbatim}
In the above pseudocode, system call name is followed by
\texttt{name=value} argument pairs and then by return value
preceded by \texttt{=>}.  In a matching system call sequence
\texttt{y} must have the same value in lines \texttt{1} and
\texttt{2}, three system calls in line \texttt{2} may be
executed in any order, and \texttt{x}, \texttt{z} may take any
value.  For example, sequence
\begin{verbatim}
accept fd=5 => 6
dup fd=6 => 2
dup fd=6 => 1
dup fd=6 => 0
execve exe=sh
\end{verbatim}
matches the episode.

For the empirical evaluation we use a sequential version of PWW
which facilitates easy estimation of the amount of work. We set
$L_{max}=100$ because malicious code is often transmitted in a
single packet with only a few dozens of instructions. For
simplicity, we assume that  one system call arrives per time
unit. We use a stream of $10\,000$ system calls recorded on a
Linux machine, into which we inject episode instances with
varying delays between instructions. As a baseline, we use a
fixed duration window of $200$ time units. We find that:
\begin{figure}[t]
	\centering
    \includegraphics[scale=0.65]{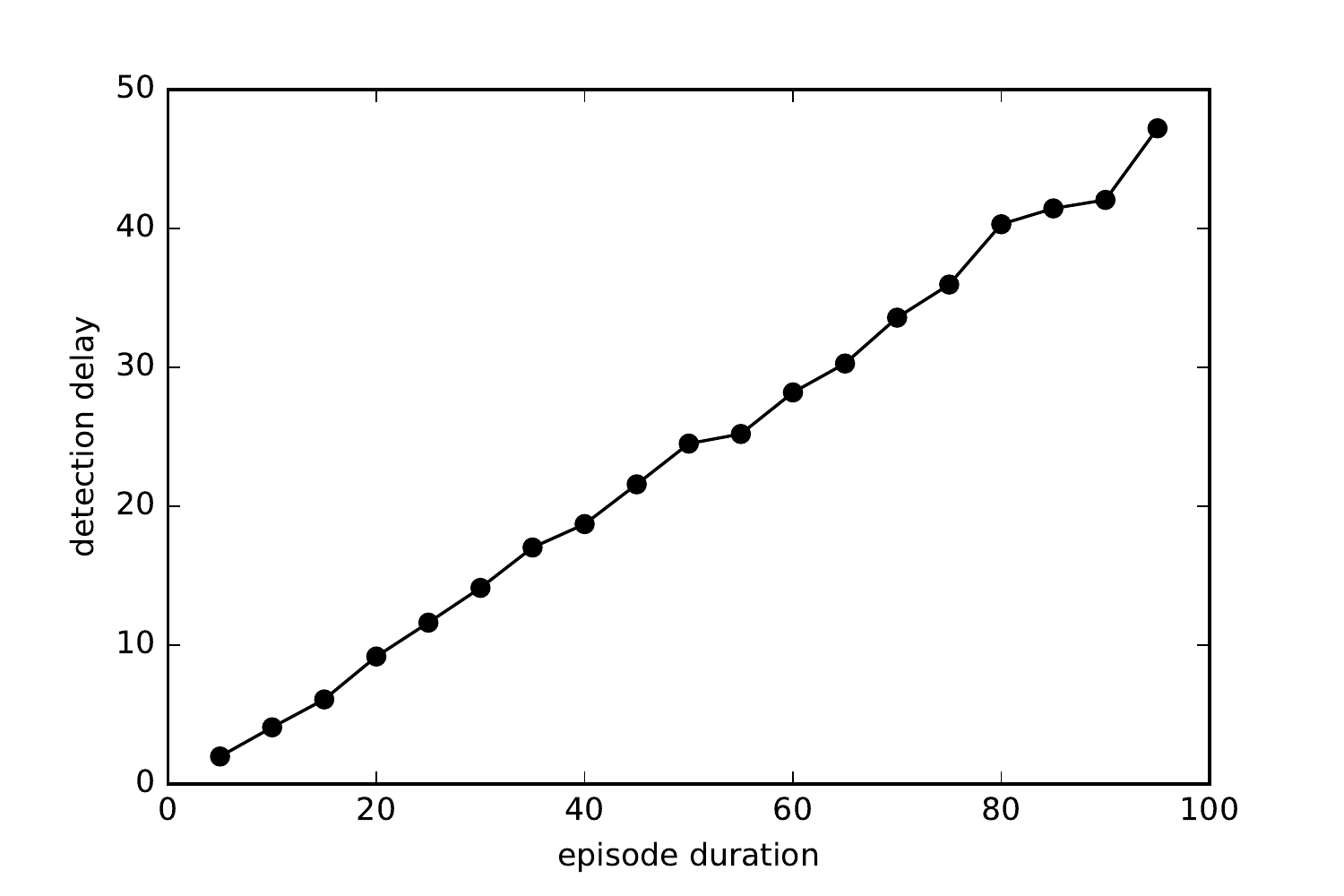}
	\caption{Detection delay. The delay grows linearly
    with shell code duration, with factor 0.5, as expected. In
    other words, by linearly increasing the amount of work we
    are able to detect patterns with delay which is only half of
    the pattern duration.}
    \label{fig:detection-delay}
\end{figure}
\begin{figure}[t]
	\centering
    \includegraphics[scale=0.65]{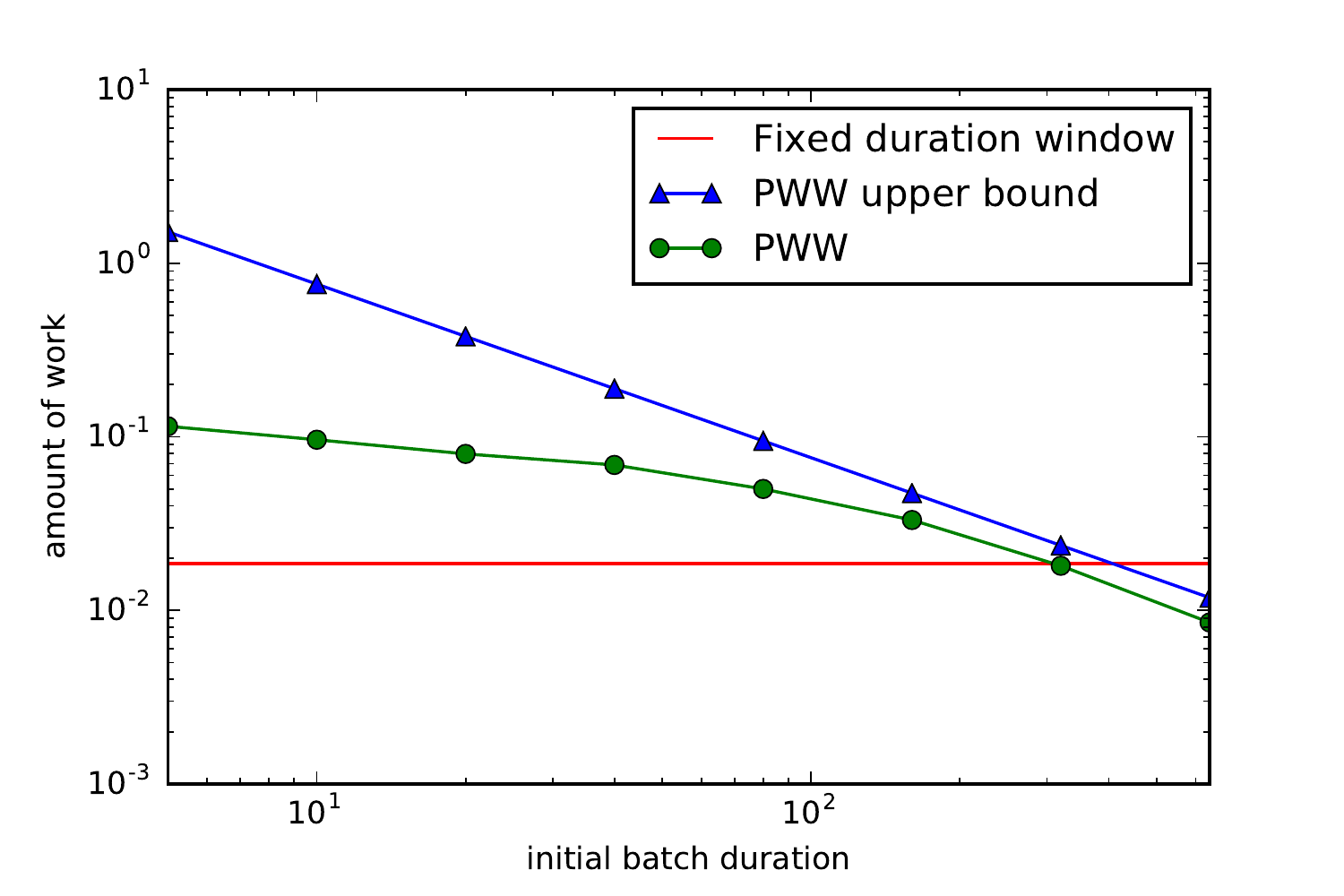}
    \caption{Amount of work. The amount of work of PWW (green)
    approaches but stays below the theoretical bound (blue).
    The amount of work of PWW is lower than of fixed duration window
    (red) for sufficiently large initial batch duration.}
    \label{fig:amount-of-work}
\end{figure}
\begin{itemize}
	\item The detection delay (Figure~\ref{fig:detection-delay})
		is proportional to the episode duration with factor 0.5.
	\item The amount of work (Figure~\ref{fig:amount-of-work})
        approaches but stays below bound (\ref{eqn:rho}) as the
        initial batch duration grows. For sufficiently large
        initial batch duration, the amount of work of PWW is
        lower than of fixed window duration.
\end{itemize}
The results are in accordance with the algorithm analysis.  If
a fixed duration window were used, either the average detection
delay would be larger, or some episodes were left undetected.
PWW ensures timely detection of episodes of any duration at the
cost of only a constant factor increase in the amount of work.

The source code, data, and results for the case study are available
at \url{https://bitbucket.org/dtolpin/pww-paper-case-studies}.
The evaluation notebook can be viewed in the browser at
\url{http://tinyurl.com/jgknulz}.

\section{Related Work}
\label{sec:related}

While Progressive Window Widening can be implemented from
scratch on low-level data streams, the algorithm was inspired 
and relies in implementation on batched stream processing. Batch
stream processing was introduced in Comet~\cite{HYZ+10}. Apache
Spark offers Spark Streaming~\cite{ZDL+12,ZDL+13}, a powerful
implementation of programming model \textit{discretized
streams}. Discretized streams, which enable efficient batch
processing in parallel architectures, is the enabling lower
level for PWW. 

PWW uses varying window sizes to accommodate for differences in
data. Another approach in batched stream processing is to use
\textit{adaptive window size}. Adaptive window algorithms is a
field of active research~\cite{ZLZ+04,BG07,BPR+13,YMD13}.
However, this research represents a different approach, in which
the window size is changed sequentially and adaptively, for future
windows based on earlier seen data. In PWW, several windows of
fixed sizes are applied in parallel, in a parameter-free manner
suitable for simple and robust implementation. Windows of
doubling size were proposed for processing data streams in
earlier work~\cite{AHJ03}, however the approach employed in PWW
is significantly different in that temporal windows of unbounded
doubling durations are applied in parallel, while still
ensuring efficient use of resources.

\section{Contribution and Future Research}
\label{sec:contribution}

This paper introduced the Progressive Window Widening algorithm
for data stream processing using temporal sliding windows. The
algorithm
\begin{itemize}
    \item solves the dilemma of smaller window size at a cost
        of inability to recognize longer patterns versus larger
        windows but slower response;
    \item works in parallel, in a manner suitable for modern
        multi-core multi-node cluster architectures;
    \item uses computational resources efficiently, imposing
        only a constant factor overhead compared to an algorithm
        based on a single window size.
\end{itemize}

The basic algorithm described in the paper brings a solution to
the stated problem. At the same time, the algorithm design poses
a number of questions and opens several research directions. 
\begin{itemize}
    \item Many adaptive window algorithms are, unlike PWW, essentially
        sequential. Modern data frameworks provide an opportunity to
        exploit the parallelism for more flexible and efficient
        adaptation.
    \item Doubling of batch durations is chosen in PWW due to
        simplicity of implementation and analysis. A different
        allocation of window sizes, either data-independent or
        adaptive, may bring better theoretical performance and
        practical results.
    \item PWW relies on batched stream processing,
        however it is only loosely coupled with the underlying 
        computing architecture, which is both an advantage and a
        drawback. A tighter coupling with lower-level
        stream processing may be helpful.
\end{itemize}
Along with others, these directions are deemed to be important
for future research.

\bibliographystyle{splncs03}
\bibliography{refs}

\clearpage
\section*{Appendix: Algorithm Implementations}

For real-life applications, the algorithm must be implemented
within a stream-processing framework, and different frameworks
provide different means and conveniences. For illustration, we
describe an implementation for {\it Apache Spark} \cite{ZCF+10}. We
provide code snippets in Scala and Python.

{\it Spark Streaming} implies that the stream processing
structure is defined statically rather than dynamically. Because
of that, all hierarchically combined streams should be defined
upfront.  Here comes handy the upper bound on the session
duration --- $T_{max}$. If we start with batch duration of 1
unit, and allocate $\lceil \log_2 T_{max} \rceil$ levels of
streams of combined batches, each session will fall entirely
within a sliding window at some level.

\begin{figure}
    \centering
	\includegraphics[scale=1.5]{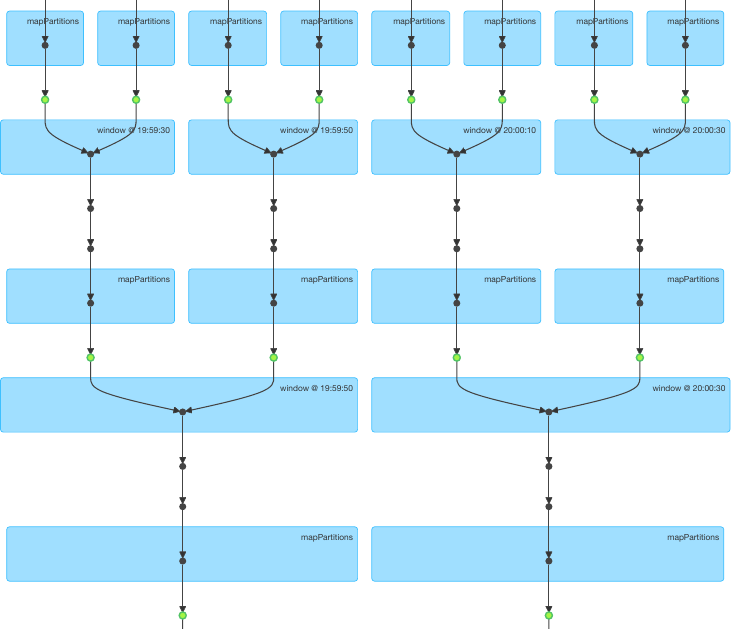}
	\caption{Progressive window widening in Apache Spark.}
    \label{fig:pww-spark}
\end{figure}

Code fragments illustrating an implementation of progressive window
widening are provided below. The code snippets
are also available as a
\href{https://gist.github.com/dtolpin/84158adce3e4218af06453771cae15f2}{\underline{\textbf{GitHub} Gist}}
\href{https://gist.github.com/dtolpin/84158adce3e4218af06453771cae15f2}{(http://tinyurl.com/hqugoyb)}.
A visualization of a Spark Streaming job executing progressive
window widening, as displayed by Apache Spark's web UI, is shown in
Figure~\ref{fig:pww-spark}.

\clearpage
\subsection*{Scala}

\lstdefinestyle{scalastyle}{
    frame=single,
    basicstyle=\linespread{1.2}\small\ttfamily,
	keywordstyle=\bfseries\color{green!60!black},
	commentstyle=\itshape\color{blue!40!gray},
	showstringspaces=false
}

The main loop is initialized with a stream  of batches of unit size.  Function
\texttt{detect} is called at each level, applies a pattern recognition
algorithm, and stores the result as a side effect.

\begin{lstlisting}[language=scala, style=scalastyle]
(1 to config.depth).foldLeft((batches, 1)) {
  case ((batch, batch_size), _) => {
    // Generate sliding windows with half-window step
    val windows = batches
      .window(Seconds(2*window_size), Seconds(window_size))
      .reduceByKey(_ ++ _)

    // Apply data mining/pattern recognition algorithm
    detect(windows)

    widen(batch, batch_duration, config.max_length)
  }
}
\end{lstlisting}

Functions \texttt{widen} and \texttt{combine} are defined as follows:

\begin{lstlisting}[language=scala, style=scalastyle]
def combine[A](a: Vector[A], b: Vector[A], max_length: Int)
    = {
  val ab = a ++ b
  if(ab.length > 2*max_length )
    ab.patch(max_length, Seq(), ab.length - 2*max_length);
  else
    ab
}

def widen(_batches: DStream[(String, Vector[Syscall])],
          _batch_duration: Int,
          max_length: Int) = {
  // Double batch duration
  val batch_duration = _batch_duration*2
  val batches = _batches
    .window(Seconds(batch_duration), Seconds(batch_duration))
    .reduceByKey(combine(_, _, max_length))
  (batches, batch_duration)
}
\end{lstlisting}

\clearpage
\subsection*{Python}

\lstdefinestyle{pythonstyle}{
    frame=single,
    basicstyle=\linespread{1.2}\small\ttfamily,
	keywordstyle=\bfseries\color{green!60!black},
	commentstyle=\itshape\color{blue!40!gray},
	showstringspaces=false
}

As in the Scala version, the main loop is initialized with a stream
of batches of unit size.  Function \texttt{detect} is called at
each level, applies a pattern recognition algorithm, and
stores the result as a side effect.

\begin{lstlisting}[language=python, style=pythonstyle]
t = 1
for _ in range(ceil(log2(max_time))):
    # Generate sliding windows with half-window step
    windows = (batches
        .window(2*t, t)
        .reduce(lambda a, b: a + b))

    # Apply data mining/pattern recognition algorithm
    detect(windows)

    # Double batch duration
    t *= 2
    batches = (batches
        .window_size(t, t)
        .reduce(lambda a, b: combine(a, b, max_length)))
\end{lstlisting}

Function \texttt{combine} is defined as follows:

\begin{lstlisting}[language=python, style=pythonstyle]
def combine(a, b, max_length):
    ab = a + b
    if len(ab) - max_lenbgth > max_length:
        ab[max_length:len(ab) - max_length] = []
    return ab
\end{lstlisting}

\end{document}